
\documentclass[journal]{IEEEtran}
\usepackage{color}
%


%
\usepackage{ifpdf}

%
\usepackage{cite}

%
\ifCLASSINFOpdf
  \usepackage[pdftex]{graphicx}
\else
\fi
%
%

\usepackage{graphicx}
\usepackage{epstopdf}
\usepackage{subfigure}

%
\usepackage{amsmath}
\usepackage{amsfonts}
\usepackage{amssymb}
%
\interdisplaylinepenalty=2500
\ifCLASSOPTIONcompsoc
  \usepackage[caption=false,font=normalsize,labelfont=sf,textfont=sf]{subfig}
\else
  \usepackage[caption=false,font=footnotesize]{subfig}
\fi
%

%
\usepackage{fixltx2e}
\usepackage{footnote}
\makesavenoteenv{tabular}

%

\ifCLASSOPTIONcaptionsoff
  \usepackage[nomarkers]{endfloat}
 \let\MYoriglatexcaption\caption
 \renewcommand{\caption}[2][\relax]{\MYoriglatexcaption[#2]{#2}}
\fi
\usepackage{comment}
\usepackage{amsthm}
\usepackage{ifthen}

\hyphenation{op-tical net-works semi-conduc-tor}
\newtheorem{remark}{Remark}
\newtheorem{proposition}{Proposition}
\newtheorem{definition}{Definition}

\newboolean{showcomments}
\setboolean{showcomments}{true}
\newcommand{\wang}[1]{\ifthenelse{\boolean{showcomments}}
	{ \textcolor{red}{(ZW:  #1)}}{}}
\newcommand{\fliu}[1]{\ifthenelse{\boolean{showcomments}}
	{ \textcolor{red}{(FL:  #1)}}{}}
\newcommand{\zhao}[1]{\ifthenelse{\boolean{showcomments}}
	{ \textcolor{blue}{(CZ:  #1)}}{}}
\newcommand{\slow}[1]{\ifthenelse{\boolean{showcomments}}
	{ \textcolor{blue}{(SL:  #1)}}{}}

\begin{document}
	
	\title{Towards High-Efficiency Cascading Outage Simulation and Analysis in Power Systems: A Sequential Importance Sampling Approach}
	
	\author{Jinpeng~Guo,~\IEEEmembership{Student Member,~IEEE, }
		Feng~Liu,~\IEEEmembership{Member,~IEEE, }
        Jianhui~Wang,~\IEEEmembership{Senior Member,~IEEE, }
		Junhao~Lin,~\IEEEmembership{Student Member,~IEEE, }
		and~Shengwei~ Mei,~\IEEEmembership{Fellow,~IEEE}
		\thanks{Manuscript received XXX, XXXX; revised XXX, XXX. \textit{(Corresponding author: Feng Liu)}.}
       	\thanks{Jinpeng Guo, Feng Liu, and Shengwei Mei are with the Department of Electrical Engineering, Tsinghua University, Beijing, 100084, China  (e-mail: lfeng@tsinghua.edu.cn). } 
       	\thanks{Jianhui Wang is with the the Energy Systems Division, Argonne National Laboratory, Argonne, IL 60439 USA (e-mail: jianhui.wang@anl.gov)}
		\thanks{Junhao Lin  is with the Department of Electrical and Electronic
			Engineering, University of Hong Kong, HKSAR, Hong Kong (e-mail: jhlin@eee.hku.hk).}

        }

	\markboth{REPLACE THIS LINE WITH YOUR PAPER IDENTIFICATION NUMBER (DOUBLE-CLICK HERE TO EDIT}%
	{Shell \MakeLowercase{\textit{et al.}}:  Bare Demo of IEEEtran.cls for Journals}
	
	\maketitle
	
	\begin{abstract}
This paper addresses how to improve
the computational efficiency and estimation reliability in cascading outage analysis. We first formulate a cascading outage as a Markov chain with specific state space and transition probability by leveraging the Markov property of cascading outages. It provides a rigorous formulation that allows analytic investigation on cascading outages in the framework of standard mathematical statistics. Then we derive a sequential importance sampling (SIS) based simulation strategy for cascading outage simulation and blackout risk analysis with theoretical justification. Numerical experiments  manifest that the proposed SIS strategy can significantly bring down the number of simulations and  reduce the estimation variance of cascading outage analysis compared with the traditional Monte Carlo simulation strategy.
\end{abstract}
	
	\begin{IEEEkeywords}
		Cascading outage; Markov chain; sequential importance sampling; blackout risk.
	\end{IEEEkeywords}

	\IEEEpeerreviewmaketitle

	\section{Introduction}
	\IEEEPARstart{C}{ascading} outage is a sequence of component outages triggered by one or several initial disturbances or failures of system components \cite{r1,r2}. In certain extreme conditions, cascading outages can lead to unacceptably serious consequences. A number of blackouts happened in the power systems worldwide in recent years is a case in point  \cite{r4,r5}. Despite that the probability of blackouts due to cascading outage is tiny, the catastrophic consequence and vast influential range raise great attention to the investigation of cascading outages, especially  in  large-scale interconnected power systems.
	
	Due to the random nature of cascading outages, statistics and probability analysis are extensively deployed as basic mathematic tools to analyze cascading outages based on historical data \cite{r6,r7,r8}. However, it is difficult to acquire accurate and adequate data in practice as blackouts are essentially rare events and very limited information has been recorded to date. In this regard, several high-level statistic models were proposed for analyzing cascading outages, such as CASCADE model \cite{r9} and branching process model \cite{r10,r11}. These models aim to capture the macroscopic features of the overall system in the sense of statistics while omitting the details of cascading outage process. To achieve a closer sight into  cascading outages, researchers, however, consider to pick up such details back, including the uncertain occurrence of initial disturbances, action of protection and dispatch of control center. This consequentially results in different blackout models, such as hidden failure model \cite{r12,r13}, ORNL-PSerc-Alaska (OPA) model \cite{r14,r15}, AC-OPA model \cite{r16}, to name a few. As this kind of approaches are capable of analyzing the cascading outage process in a detailed way, it is expected to exploit mechanisms behind cascading outages by carrying massive simulations  on these models.
	
	Regarding every simulation as an independent identical distribution (i.i.d.) sample, the simulation-based cascading outage analysis is essentially a statistic analysis based on a sample set produced by Monte Carlo simulations (MCS). In the past decade, the MCS approach contributes a lot to reveal the underlying physical mechanism of cascading outages and has been popularly used. However, the intrinsic deficiency of the MCS seriously limits its practicability and  deployments. The main obstacle stems from the notorious "curse of computational dimensionality". It is recognized that a realistic large-scale power system is always composed of numerous components, such as transmission lines,  transformers, and  generators. The possible evolutionary paths of cascading outages diverge dramatically. Hence a specific cascading outage with serous consequence is indeed an extremely rare event. In this context, the MCS analysis turns out to be computationally intractable as a huge number of simulations are required to achieve a reliable estimation of the probability distribution of cascading outages. Empirical results also confirm that the estimation variance can remain unacceptably large even if thousands of simulations have been conducted for a system with only tens of buses. This crucial issue, however, has not been cared seriously enough in the literature and the reliability of the MCS-based blackout analysis could be overestimated to a large extent. This motivates two essential questions: 1) how many simulations are required to guarantee the reliability of the estimation? and 2) whether or not the number of simulations can be effectively reduced without degrading the reliability of the estimation?
	
	 	 In \cite{r17},  a condition is proposed to characterize the relationship between the estimation accuracy and the sample size of the MCS, answering the first question. As for the second question is still kept open to date. This paper aims to bridge the gap both theoretically and algorithmically. Noting  that the theoretic results given in \cite{r17} are built on the standard MCS, it is intuitive to expect that the sample size could be shrunken by adopting certain advanced sampling techniques instead of the naive Monte Carol sampling strategy. In the literature, importance sampling (IS) is an effective method to improve the efficiency of the MCS \cite{r18,r19}, which has already been successfully deployed in various fields including power systems, such as security analysis for power grids \cite{r20} and  risk management in electricity market \cite{r21}. It has also been intuitively used in cascading outage simulations in a heuristic manner \cite{r28,r29,r30,r31}.  Nevertheless, due to the absence of solid mathematical formulation of cascading outages, it is difficult to carry on the analytic investigation in a rigorous fashion. It is also unknown what are the  scope and the conditions of the application of the IS strategy in cascading outage simulations, and how to set the parameters of the IS in simulations.
	 	 	
	Sequential importance sampling (SIS) is an extension of the IS method, which decomposes the  IS into a sequence of sampling steps to facilitate  the implementation for multi-period random process \cite{r18, r22}. Inspired by the success of IS/SIS in diverse fields, this paper applies the SIS to derive a novel simulation strategy for the sake of achieving a high-efficiency and reliable cascading outage analysis. The main contributions of this paper are twofold:
	
	\begin{enumerate}
		\item 	The process of cascading outage in power systems is formulated as a Markov chain. Differing from the current formulations in the literature, we specifically define the state space and transition probability associated with the Markov chain, resulting a well-defined analytic model of cascading outages. Based on the formulation, rigorous mathematical statistics analysis is allowed.
		\item 	Benefiting from the proposed analytic formulation, a high-efficiency cascading outage simulation strategy is derived based on the SIS with theoretical guarantees. Taking full advantage of the Markov property of cascading outages, it is capable of considerably reducing both the number of simulations and the estimation variance.
	\end{enumerate}

	We demonstrate the proposed formulation and simulation strategy outperforms the traditional MCS strategy using the data of standard IEEE 300-bus system and a real provincial power grid   in China.The rest of this paper is organized as follows: traditional blackout modeling and the MCS based analysis are briefly reviewed   in Section II. Section III gives the new formulation of cascading outages based on Markov chain. Then the SIS based simulation strategy associated with the theoretic analysis are presented in Section IV. Case studies in Section V show the benefits and efficiency of the proposed simulation strategy. Finally,  Section VI concludes the paper with remarks.

	\section{ MCS-based Analysis on Cascading Outages}
    In cascading outage analysis, load shedding is usually adopted to evaluate the severity of the cascading outage. To characterize load shedding distribution as a consequence of cascading outages, various kinds of blackout models are built by emulating the cascading outage process in a "{\em descriptive}" way. Massive simulations on such models can provide a number of i.i.d. samples for statistical analyses. This approach is essentially based on the MCS if one simply regards each simulation as a sample. Though there are many kinds of blackout models, the simulation principles are quite similar. In simple terms, at the $j$-th stage of the $i$-th sampling of a cascading outage, the blackout model for simulation determines the outage probability of each component in the system at the next stage, depending on the system states $x_j^i$ and other related factors, such as weather and maintenance conditions. Then the outage components at stage $j$ are sampled and the system state $x_j^i$ transits to ${x_{j+ 1}^{i }}$. Repeating the above steps until there is no occurrence of new outages , one simulation is completed. It gives a sample of load shedding $Y$\footnote{Here, load shedding, $Y$, is recognized as a random variable, as we will strictly define later on.}, denoted by $y_M^i$.

    Define the sample sets as ${Y_M}: = \{ y_M^i,i = 1, \cdot  \cdot  \cdot {N_M}\} $ obtained after $N_M$ simulations. Then the probability distribution of load shedding can be estimated by statistics based on $Y_M$. We care about the probability of a given incident $A$ that describes the load shedding being greater than a certain level $Y_{0} $. The unbiased estimation of the true probability $\mu (A)$ is given by
	\begin{equation} \label{eq1}
	\tilde{\mu} (A)=\frac{1}{N_M} \sum \nolimits_{i=1}^{N_M}\delta _{\{ {y_M^i} \ge Y_{0} \} }
	\end{equation}
	where $\delta _{\{ \cdot \} } $ is the indicator function of set $\{ y^{i}_M \ge Y_{0} \} $, which means $\delta _{\{ y^{i}_M \ge Y_{0} \} } =1$ if $ y_{i} \ge Y_{0}$; otherwise,  $\delta _{\{ y^{i}_M \ge Y_{0} \} } =0$. It is easy to see ${\delta _{\{  \cdot \} }} = \delta _{\{  \cdot \} }^2$.
	
	The  variance of the estimation on $N_M$ samples is given by
	\begin{equation} \label{eq2}
	\sigma ^2(A) = D(A)={\frac{1}{N_M}\left( \mu (A)(1-\mu (A)) \right)}
	\end{equation}
	
	In addition to probability distribution of the load shedding, another important indicator in the cascading outage analysis is the blackout risk. Theoretically, the blackout risk of a power system can be defined as the expectation of such load shedding greater than the given level, $Y_0$. That is
	\begin{equation} \label{eq3}
	Risk(Y_0)=\mathbb{E} (Y \cdot \delta _{\{ Y \ge Y_{0} \} } )
	\end{equation}
	Similar to the probability estimation, it can be estimated by
	\begin{equation} \label{eq4}
	\tilde Ris{k}(Y_0) =\frac{1}{{{N_M}}}\sum\nolimits_{i = 1}^{{N_M}} {y_M^i \times {\delta _{\{ y_M^i \ge {Y_0}\} }}}
	\end{equation}
	
	The definition of the blackout risk in \eqref{eq3} represents the risk of cascading outages with serious consequences. It is  closely related to the well-known risk measures, value at risk (\textit{VaR}) and conditional value at risk (\textit{CVaR}) \cite{r23,r24}.  Actually, the  risk measure, \textit{Risk}, defined in \eqref{eq3} is \textit{CVaR}$_\alpha$ times $(1-\alpha )$, provided  \textit{VaR}$_\alpha$ is known as the risk associated to the given load shedding level $Y_{0}$ with a confidence level of $\alpha$.
	
	With the sample set obtained by repeatedly carrying out simulations on the cascading outage model, the probability of load shedding and the blackout risk can be estimated by using \eqref{eq1} and \eqref{eq4}, respectively. However, it should be noted that, to achieve acceptably small variance of estimation, a tremendous number of samples are usually required even if the system merely has tens or hundreds of buses. To illustrate this matter of fact, we use the IEEE 300-bus system as an example. Based on OPA model, the  probability of load shedding is estimated by using \eqref{eq1} based on 10 groups of simulations, where each group contains 2000 i.i.d. simulations. As shown in Fig. \ref{fig.1}, the variance of the probability estimation is quite large. It is also found that the simulations capture few events with the load shedding larger than 800MW, showing that the traditional MCS approach might be neither efficient nor reliable enough to cope with the cascading outage analysis in large-scale power systems. To the best of our knowledge, this issue has not been paid enough attention in the literature.
		\begin{figure}[htb]
			\centering
           \includegraphics[width=0.4\textwidth]{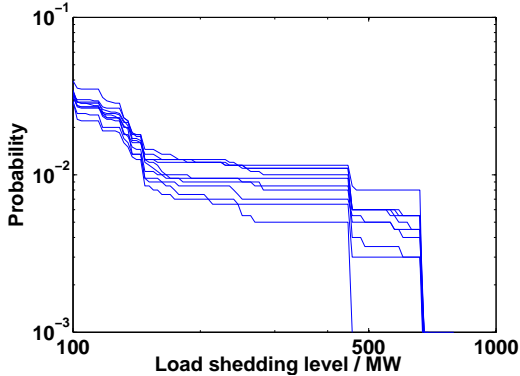}
			\caption{Probability  estimation of the load shedding}
			\label{fig.1}
		\end{figure}
\section{ A Markov Chain Based Formulation}
	A cascading outage is always triggered by one or several initial disturbances or componentwise failures. As a consequence, the protection devices and the control center begin to take actions, and then the system state changes sequentially according to these actions. Such state changes happen in a random way, implying that a cascading outage could be formulated as a  stochastic process. To this end, the state space of the cascading outage as well as the associated state-transition probability have to be defined appropriately. In this paper,  the system configuration is taken as the random state variables. Note that the system configuration defined here is generic, which can incorporate either controlled or uncontrolled changes, such as line tripping, shunt capacitor switching and On-Load Tap Changer regulation. We denote  $X_j$  as the state variable at stage $j$ of a cascading outage.  All possible system states span the state space, denoted by $\mathcal X$.
	
	 Assume the system has $N_c$ components and denote $N:=\{1,2,\cdots n\}(n\in\mathbb{Z^+})$ as the total stages of cascading outages.  Then an $n$-stage cascading outage can be defined below.
	\begin{definition}\label{cascadingoutage}
	An $n$-stage cascading outage is a stochastic sequence $
	Z:= \{ {X_1, X_2, ..., X_j, ..., X_n},  \forall j\in N, X_j\in {\mathcal X} \}
	$
	with respect to the random state space $\mathcal X$ and a given joint probability distribution $f(X_{n} ,\cdots X_{2} ,X_{1} )$.
	\end{definition}
	
	In the above definition, $j$ is the stage label of the cascading outage, while  $n$ is the total number of stages, or the \textit{length} of the cascading outage. State variable $X_{j} $ is a discrete random vector with the dimension of $N_c$.  Each  element of $X_j$ stands for a state of the corresponding component at stage $j$ during the cascading outage. Correspondingly,  $\mathcal X$ is a $N_c$-dimensional state space. Moreover, denoting the number of possible states of component $k$ by $s^k$, there is
	\begin{equation}\label{eq5}
	 |\mathcal X|=  \prod \limits _{k = 1}^{{N_c}} {{s^k}}
	\end{equation}
	where $|\mathcal X|$ denotes the number of elements in $|\mathcal X|$.
	
	For simplicity, we abuse the notation $Z:=\{X_j^N\}$ to denote a cascading outage. Then the joint probability distribution $f(X_{n} ,\cdots X_{2} ,X_{1} )$ is simplified into $f(Z)$. On the other hand, since the number of components in the system is finite, the number of  possible stochastic sequences representing the cascading outages is finite as well. We denote $\mathcal{Z}$ as the set of all possible cascading outages in a system. Thus,  $|\mathcal{Z}|$ is finite.
	
	It is worthy of noting that, the joint probability distribution $f(Z)$ is practically difficult to obtain, even if the probability distribution functions (PDFs) of individual components are known. Next we show this issue can be circumvented by using the intrinsic Markov properties of cascading outages.	
	
	In Definition 1, for a given $n$-stage cascading outage, the associated load shedding is merely a stochastic variable being a function of the stochastic sequence  $\{X_j^N\}$, which is denoted by  $Y=h(X_{1} ,\cdots, X_{n} )=h(Z)$ .

	Note that in a cascading outage process, all the actions of protections, controls and operations at arbitrary stage $i$  are completely determined by the previous stage $i-1$.  In this context, the cascading outage $\{ X_j^N\} $ in the definition above is indeed a Markov chain. Then invoking the conditional probability formula and the Markov property, the joint probability distribution $f(Z)$ should satisfy
		\begin{equation} \label{eq6}
		\begin{array}{rcl}
		f(Z) & = &f({X_n}, \cdots, {X_2},{X_1}) \\
		&= &f_n({X_n}|{X_{n - 1}} \cdot  \cdot  \cdot {X_1})\cdot f_{n-1}({X_{n - 1}}|{X_{n - 2}} \cdots {X_1}) \\
		 & & \cdots f_2(X_2|X_1)\cdot f_1({X_1}) \\
		&= &f_n({X_n}|{X_{n - 1}})\cdot f_{n-1}({X_{n - 1}}|{X_{n - 2}}) \cdot  \cdot  \cdot f_1({X_1})
		\end{array}
		\end{equation}
	where $f_{j+1}({X_{j+1}}|{X_j})$ is the related conditional probability.	

	Assume in the sampling process, $x^i_j$ is the sample of the state at stage $j$ of the $i$-th sampling, while the length of the cascading outage in the $i$-th sampling is $n^i$.  Then  we have
	\begin{equation} \label{eq7}
	\begin{aligned}
	& {\mathbf{Pr}({X_{n^i}} = {x^i_n}|{X_{{n^i} - 1}} = {x^i_{{n^i} - 1}}, \cdots {X_2} = {x^i_2},{X_1} = {x^i_1})} \\
	 =&{ \mathbf{Pr}({X_{n^i}} = {x^i_n}|{X_{{n^i} - 1}} = {x^i_{{n^i} - 1}})} \\
	\end{aligned}
	\end{equation}

	 Eqs. \eqref{eq6} and \eqref{eq7} mathematically indicate that,  a cascading outage can be simulated following the sequential conditional probability, other than directly using the joint probability distribution. Specifically, denote ${F^i_{j}} $ as the set of outage  components at stage $j$ of the $i$-th  sampling of the cascading outage,  and $\bar {F}^i_{j} $ as the set of the normal components after stage $j$ of the $i$-th sampling. Let
		\begin{equation} \label{eq8}
            p_{j,k}^i = {\varphi _k}(x_j^i)
		\end{equation}
as the outage probability of component $k$ at stage $j$ of the $i$-th sampling, where $\varphi _k$ is the corresponding PDF.  Then the transition probability, $\hat{p}^i_{j,j+1}$, from state ${x^i_{j}} $ to state ${x^i_{j + 1}} $  is
	  \begin{equation} \label{eq9}
	  {\hat{p}^{i}_{j,j + 1}} = f({x^i_{j + 1}}|{x^i_{j}}) = \prod\limits_{k \in {F^i_{j}}} {p^i_{j,k}} \prod\limits_{k \in {\bar{F}^i_{j}}} {(1 - p^i_{j,k})}
	  \end{equation}

	Based on \eqref{eq9}, the probability of the $i$-th  sample of the cascading outage (the complete path), denoted by $p_c^i$, is given by 	
	\begin{equation} \label{eq10}
	{p_c^{i}} = \prod\limits_{j = 1}^{n^i-1} {{\hat{p}^i_{j,j + 1}}}
	\end{equation}
	  \begin{remark}
	  	\label{R1}
	  	   This sequential treatment actually has been heuristically  used in most cascading outage simulations albeit without justifying its validity. By strictly defining cascading outages as a Markov chain with appropriate state space and transition probability distribution, our work provides not only a justification for such a extensively-used treatment, bust also a solid mathematical foundation for deriving efficient cascading outage simulation strategies and carry out theoretical analysis, as we discuss in Section IV.
	  \end{remark}
	  \begin{remark}
	  	  	\label{R2}
			Eq. \eqref{eq10}  indicates that the probability of a cascading outage can be very small as it is  the product of a series of small probabilities. Particularly, a cascading outage with a severe consequence usually involves many stages with very small probabilities, resulting in an extremely small probability. It is the main cause that blackout events can hardly be captured by using traditional MCS. As a consequence, insufficient samples of rare events may further give unreliable estimation results of the blackout risk with  biased expectation and/or large variance. This problem, theoretically, cannot be alleviated effectively in large-scale system by merely increasing the number of simulations, as the size of state space $\mathcal X$  expands dramatically when the number of system components increases (according to Eq. \eqref{eq5}).
	  	 \end{remark}

	\section {Cascading Outage Simulation Based on SIS}
	
	\subsection{Importance Sampling for Cascading Outage Simulations}
	For improving the sampling efficiency and depressing the estimation variance, importance sampling (IS) technique is recognized an effective tool. Its basic idea is to sample the stochastic process under a proposal joint probability distribution $g(X_{n} ,X_{n-1} ,\cdots , X_{1} )$  ($g(Z)$ for short) instead of the true joint probability distribution $f(Z)$. Specially, the probability of arbitrary possible cascading outage under the proposal joint probability distribution needs to be positive, i.e., $g(Z)>0,Z \in \mathcal Z$. Then after $N_{s} $  i.i.d. simulations, we can obtain a sample set of cascading outages, $Z_s:=\{z_s^i, i =1, 2, \cdots, N_s\}$, where, ${z_s^i} = \{{x^i_{1}}, x^i_2, \cdots, {x^i_{n^i}}\} $ is the $i^{th}$ sample of cascading outages;  $n^i$  the length of the sampled cascading outage in the $i^{th}$  simulation; 	
	 $x^i_j$  the sampled state at stage $j$ of the $i^{th}$ simulation. Afterward, we can  obtain the sample set of load shedding, $Y_s:=\{ y^i_{s}, i=1\cdots N_{s} \} $, where $y^i_{s}=h(z^i_{s})$. For simplicity, we abuse the notation $\delta_{Y_0}$ throughout to stand for $\delta_{\{ h(Z) \ge Y_0\}}$. As $|\mathcal Z|$ is finite, the true probability of event $A$ defined  previously is given by
		\begin{equation} \label{eq11}
		\mu (A) =\sum\limits_{Z \in \mathcal Z} {{\delta _{ Y_0}}f(Z)}
		\end{equation}

	As the true probability $\mu(A)$ cannot be obtained accurately, we usually estimate it through Eq. \eqref{eq1} based on $N_M$ samples given by the MCS under the original joint probability distribution $f(Z)$. The variance of estimation, $D(A)$, is given by \eqref{eq2}.
	
	We are interested in the expectation and variance based on the IS under the proposal probability distribution $g(Z)$. To this end, we let $w(Z)=f(Z)/g(Z)$, yielding
		 	\begin{equation} \label{eq12}
		 	\mu (A)=	\sum _{Z \in \mathcal Z} {{\delta _{{Y_0} }}w(Z)} g(Z)
		\end{equation}
	
	As the IS with the proposal joint probability distribution $g(Z)$ is deployed,  the unbiased estimation of $\mu (A)$  based on $N_{s}$ samples turns to be
	\begin{equation} \label{eq13}
	{{\tilde {\mu}} _{IS}(A)=
		\frac{1}{{{N_{s}}}}\left(\sum\limits_{i = 1}^{{N_{s}}} {{w}({z^i_s}) }\cdot {\delta _{\{ {y^i_s} \ge {Y_0}\} }}\right)}
	\end{equation}
	where $w(z^i_s)>0$ is the sampling weight subject to
	\begin{equation} \label{eq14}
	{w}({z_s^i}) = \frac{{f({z^i_s})}}{{g({z^i_s})}}
	\end{equation}
	
	Moreover, the variance of the probability estimation is
	\begin{equation} \label{eq15}
		\begin{array}{ll} {{D _{IS}}(A)} &{ = \frac{{\mathbb{E}{{\{ {\delta _{{Y_0} }}w(Z) - \mathbb{E}[{\delta _{{Y_0} }}w(Z)]\} }^2}}}{{{N_{s}}}} } \\
		&{ = \frac{{\mathbb{E}\{ {{[{\delta _{{Y_0} }}w(Z)]}^2}\}  - {{\{ \mathbb{E}[{\delta _{{Y_0} }}w(Z)]\} }^2}}}{{{N_{s}}}}}\\
		&{  = \frac{{\sum\limits_{Z \in \mathcal Z} {\delta _{{Y_0} }^2{w^2}(Z)g(Z)}  - {{[\sum\limits_{Z \in \mathcal Z} {{\delta _{{Y_0} }}w(Z)g(Z)} ]}^2}}}{{{N_{s}}}} } \end{array}
	\end{equation}

 Let
	\begin{equation} \label{eq16}
	{w_0} = \frac{{\sum\limits_{Z \in \mathcal Z} {\delta _{{Y_0} }^2{w^2}(Z)g(Z)} }}{{\sum\limits_{Z \in \mathcal Z} {\delta _{{Y_0} }^2w(Z)g(Z)} }}
	\end{equation}
     Then substituting  \eqref{eq12} and   \eqref{eq16} into  \eqref{eq15} yields
    \begin{equation} \label{eq17}
     \begin{array}{ll} {D _{IS} (A)}  &{=\frac{1}{{{N_{s}}}} \left( {{w_0}\sum\limits_{z \in {\mathcal Z}} {{\delta _{{Y_0} }}w(z)g(z)}  - {\mu^2}(A)} \right)} \\ &{=\frac{1}{N_{s} } \left( w_{0} \mu(A)-\mu^{2} (A) \right)} \end{array}
     \end{equation}

Next we present some important propositions.

\begin{proposition}
		\label{p1}
	Given $g(Z)$, $w(Z)$ and $\mathcal Z$, there must be
	 $$ {w_0}  \in [\mathop {min}\limits_{Z \in {\mathcal Z}} w(Z),\mathop {\max }\limits_{Z \in {\mathcal Z}} w(Z)] $$	
\end{proposition}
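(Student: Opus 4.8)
The plan is to recognize $w_0$ as a weighted average of the numbers $\{w(Z):Z\in\mathcal Z\}$ and then invoke the elementary fact that such an average can neither exceed the largest of those numbers nor fall below the smallest.

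First I would set, for each $Z\in\mathcal Z$,
\[
\lambda(Z):=\frac{\delta_{Y_0}^2\, w(Z)\, g(Z)}{\sum_{Z'\in\mathcal Z}\delta_{Y_0}^2\, w(Z')\, g(Z')},
\]
tacitly assuming the denominator is nonzero; this is exactly the non-degenerate case, since by \eqref{eq11} together with $\delta_{Y_0}^2=\delta_{Y_0}$ and $w(Z)g(Z)=f(Z)$ the denominator equals $\mu(A)$, so it is positive precisely when $\mu(A)>0$ (otherwise $w_0$ is undefined and the statement is vacuous). By the hypotheses of the proposition $w(Z)>0$ and $g(Z)>0$ for every $Z\in\mathcal Z$, while $\delta_{Y_0}^2\in\{0,1\}$ is nonnegative; hence $\lambda(Z)\ge 0$ for all $Z$, and by construction $\sum_{Z\in\mathcal Z}\lambda(Z)=1$. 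Thus $\{\lambda(Z)\}_{Z\in\mathcal Z}$ is a probability weight vector on the finite set $\mathcal Z$.

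Next I would rewrite the numerator of \eqref{eq16} as $\sum_{Z\in\mathcal Z}\bigl(\delta_{Y_0}^2 w(Z)g(Z)\bigr)\,w(Z)$, which exhibits $w_0=\sum_{Z\in\mathcal Z}\lambda(Z)\,w(Z)$. Applying the termwise bounds $\min_{Z'\in\mathcal Z}w(Z')\le w(Z)\le\max_{Z'\in\mathcal Z}w(Z')$ and summing against the nonnegative weights $\lambda(Z)$ gives
\[
\min_{Z'\in\mathcal Z}w(Z')=\sum_{Z\in\mathcal Z}\lambda(Z)\min_{Z'\in\mathcal Z}w(Z')\le w_0\le\sum_{Z\in\mathcal Z}\lambda(Z)\max_{Z'\in\mathcal Z}w(Z')=\max_{Z'\in\mathcal Z}w(Z'),
\]
which is the claim; the extrema are attained because $\mathcal Z$ is finite, so no compactness argument is needed. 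There is essentially no hard step here: the only point requiring care is the degenerate case in which the denominator of $w_0$ vanishes, i.e.\ $\mu(A)=0$, which should simply be excluded (or noted to render the statement vacuous). Everything else is just the one-line observation that a convex combination of real numbers lies in their convex hull, here the interval $[\min_Z w(Z),\max_Z w(Z)]$.
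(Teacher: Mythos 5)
Your proof is correct and is essentially the same argument as the paper's: the paper's bound $\sum_Z \delta_{Y_0}^2 w^2(Z)g(Z) \le \max_Z w(Z)\cdot\sum_Z \delta_{Y_0}^2 w(Z)g(Z)$ (and its counterpart for the minimum) is exactly your convex-combination observation, just written without naming the weights $\lambda(Z)$. Your explicit handling of the degenerate case $\mu(A)=0$, where the denominator of $w_0$ vanishes, is a small point of care the paper omits, but it does not change the approach.
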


\begin{proof}
Since $w(Z)$ and $g(Z)$ are non-negative, we have
$$ {w_0} \le \frac{{\sum\limits_{Z \in \mathcal Z} {\delta _{{Y_0} }^2w(Z)g(Z)\mathop {max}\limits_{Z \in \mathcal Z} w(Z)} }}{{\sum\limits_{Z \in \mathcal{Z} } {\delta _{{Y_0} }^2w(Z)g(Z)} }} = \mathop {max}\limits_{Z \in \mathcal Z} w(Z)$$
Similarly, we have
$$ {w_0} \ge \frac{{\sum\limits_{Z \in \mathcal Z} {\delta _{{Y_0} }^2w(Z)g(Z)\mathop {min}\limits_{Z \in \mathcal Z} w(Z)} }}{{\sum\limits_{Z \in \mathcal Z} {\delta _{{Y_0} }^2w(Z)g(Z)} }} = \mathop {\min }\limits_{Z \in \mathcal Z} w(Z) $$
\end{proof}

\begin{proposition}
	\label{p2}
	Let	$D _{IS} (A)$ and $D (A)$ be the variances of the probability estimation of event $A$ defined previously by using the IS and the MCS, respectively. If  $N_s=N_M$, then
	$D _{IS} (A)<D (A)$ holds if and only if the proposal joint probability distribution $g(Z)$ satisfies $w_0<1$, or equivalently,
			\begin{equation} \label{eq18}
			w_{0} \mu(A)<\mu(A)
			\end{equation}
\end{proposition}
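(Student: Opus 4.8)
The plan is to compare the two closed-form variance expressions that have already been established earlier in the excerpt, so the argument reduces to a single algebraic cancellation. From Eq.~\eqref{eq17} we have $D_{IS}(A) = \frac{1}{N_s}\bigl(w_0\mu(A) - \mu^2(A)\bigr)$, and from Eq.~\eqref{eq2} the standard Monte Carlo variance is $D(A) = \frac{1}{N_M}\bigl(\mu(A) - \mu^2(A)\bigr)$. Under the hypothesis $N_s = N_M$ these two quantities share a common denominator, so subtracting one from the other is immediate.

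First I would compute $D_{IS}(A) - D(A) = \frac{1}{N_s}\bigl(w_0\mu(A) - \mu^2(A) - \mu(A) + \mu^2(A)\bigr) = \frac{\mu(A)}{N_s}\bigl(w_0 - 1\bigr)$, noting that the $\mu^2(A)$ terms cancel. Next I would observe that for any event $A$ of interest $\mu(A) > 0$ and of course $N_s > 0$, so the sign of the difference is determined entirely by the sign of $w_0 - 1$. Therefore $D_{IS}(A) < D(A)$ if and only if $w_0 - 1 < 0$, i.e. $w_0 < 1$, which proves both directions of the equivalence at once. Multiplying the inequality $w_0 < 1$ through by $\mu(A) > 0$ (and dividing back) shows it is equivalent to $w_0\mu(A) < \mu(A)$, which is the form written in Eq.~\eqref{eq18}.

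There is no real obstacle here; the content is essentially the cancellation above. The only point worth stating explicitly is the positivity of $\mu(A)$: if $\mu(A)=0$ the event is never realized, both variances vanish, and neither the strict inequality nor $w_0\mu(A)<\mu(A)$ can hold, so the equivalence is understood under the standing assumption that $A$ has positive probability. I would also remark, using Proposition~\ref{p1}, that $w_0$ always lies between $\min_{Z\in\mathcal Z} w(Z)$ and $\max_{Z\in\mathcal Z} w(Z)$; this clarifies that the criterion $w_0<1$ is attainable precisely when the proposal distribution $g(Z)$ places more mass than $f(Z)$ on the severe, rare-event paths, although this observation is not needed for the proof itself.
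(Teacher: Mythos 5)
Your proof is correct and follows essentially the same route as the paper, which simply states that the result follows "by directly comparing" the MCS variance formula with Eq.~\eqref{eq17}; you have just carried out that comparison explicitly, cancelling the $\mu^2(A)$ terms to get $D_{IS}(A)-D(A)=\frac{\mu(A)}{N_s}(w_0-1)$. Your added remark about needing $\mu(A)>0$ for the strict equivalence is a sensible clarification the paper leaves implicit.
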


\begin{proposition}
	\label{p3}
	Let	$D _{IS} (A)$ and $D (A)$ be the variances of the probability estimation of event $A$ defined previously by using the IS and the MCS, respectively. If $D_{IS}(A)=D(A)$, then
	$N _{IS}<N_M$ holds if and only if the proposal joint probability distribution $g(Z)$ satisfies  $w_0<1$, or equivalently,
	\begin{equation} \label{eq19}
	w_{0} \mu(A)<\mu(A)
	\end{equation}
\end{proposition}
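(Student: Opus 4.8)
The plan is to reuse the closed-form variances already established and to equate them while allowing the two sample sizes to differ. First I would rewrite the IS variance \eqref{eq17} in the factored form $D_{IS}(A)=\frac{1}{N_{IS}}\mu(A)\big(w_0-\mu(A)\big)$ and the MCS variance \eqref{eq2} as $D(A)=\frac{1}{N_M}\mu(A)\big(1-\mu(A)\big)$. Imposing $D_{IS}(A)=D(A)$ and cancelling the common factor $\mu(A)$ — legitimate once we exclude the degenerate case $\mu(A)=0$, in which $A$ is impossible, both variances vanish for every sample size, and moreover $w_0$ in \eqref{eq16} is not even well defined — gives the single relation
\[
N_{IS}=N_M\cdot\frac{w_0-\mu(A)}{1-\mu(A)}.
\]

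Second, I would verify that the right-hand side is a legitimate (finite, positive) sample size. Non-negativity of a variance applied to \eqref{eq17} yields $w_0\mu(A)-\mu^2(A)\ge 0$, hence $w_0\ge\mu(A)$, so the numerator is non-negative; and since we are in the rare-event regime $0<\mu(A)<1$, the denominator $1-\mu(A)$ is strictly positive (the case $\mu(A)=1$ makes the proposition vacuous). Thus $N_{IS}$ is well defined, up to the usual convention of treating a sample size as a real number.

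Third comes the stated equivalence. Because $1-\mu(A)>0$, the inequality $N_{IS}<N_M$ is equivalent to $\frac{w_0-\mu(A)}{1-\mu(A)}<1$, i.e. to $w_0-\mu(A)<1-\mu(A)$, i.e. to $w_0<1$; multiplying through by $\mu(A)>0$ recovers the equivalent form $w_0\mu(A)<\mu(A)$ of \eqref{eq19}. This is the same algebraic core as Proposition \ref{p2}, so I would phrase the two proofs in parallel and, if it helps the reader, retain the explicit ratio $N_{IS}/N_M=\big(w_0-\mu(A)\big)/\big(1-\mu(A)\big)$ as a bonus quantifying the achievable speed-up.

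I do not expect a genuine obstacle: once \eqref{eq17} is in hand the argument is a two-line manipulation. The only delicate points are bookkeeping ones — isolating and discarding the degenerate boundary cases $\mu(A)\in\{0,1\}$, and invoking $w_0\ge\mu(A)$ (immediate from $D_{IS}(A)\ge 0$) so that the derived $N_{IS}$ is positive rather than a meaningless negative quantity.
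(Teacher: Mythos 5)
Your proof is correct and follows the same route the paper intends: the authors dispose of Propositions 2 and 3 in one line by "directly comparing" the MCS variance formula with \eqref{eq17}, which is exactly your computation of $N_{IS}/N_M=(w_0-\mu(A))/(1-\mu(A))$ spelled out in full. Your added bookkeeping (excluding $\mu(A)\in\{0,1\}$ and noting $w_0\ge\mu(A)$ so the derived sample size is non-negative) is sound and slightly more careful than the paper, but not a different argument.
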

	
 It is easy to prove Proposition 2 and Proposition 3  by directly comparing \eqref{eq3} with \eqref{eq17}.

\begin{remark}
	Proposition 1 guarantees the existence of $w_0$, while Propositions 2 and 3 give the necessary and sufficient conditions that the IS can reduce sample size and estimation variance compared with that obtained by the MCS. In practice, it may be difficult to check the conditions  \eqref{eq18} or  \eqref{eq19}. A more convenient way is to use the following sufficient condition:
	\begin{equation} \label{eq20}
		g(Z)>f(Z), \forall Z\in \{Z \in \mathcal Z|\; h(Z)>Y_0\}
	\end{equation}	
\end{remark}

Similar conclusion can be drawn for the blackout risk assessment. Given $g(Z)$ for the IS, then the blackout risk is
	\begin{equation} \label{eq21}
		Risk_{IS} (Y_0) = \mathbb{E}(Y \cdot w(Z)\cdot \delta_{Y_0})
	\end{equation}	
The estimation of blackout risk based on $N_s$ samples is
	\begin{equation} \label{eq22}
	\tilde Ris{k_{IS}}(Y_0) = \frac{1}{{{N_{s}}}}\left( \sum\limits_{i = 1}^{{N_{s}}} {y_s^iw(y_s^i){\delta _{\{ y_s^i \ge {Y_0}\} }}}\right)	\end{equation}

	Then the estimation  variance of  \eqref{eq4} and \eqref{eq22} are given by
	\begin{equation} \label{eq23}
	D (R) = \frac{{\sum\limits_{Z \in \mathcal Z} {{h^2}(Z){\delta _{{Y_0} }}f(Z)}  - {Risk(Y_0)}^2}}{N_M}
	\end{equation}
and
	\begin{equation} \label{eq24}
	{D _{IS}}(R) = \frac{{\sum\limits_{Z \in \mathcal Z} {{w^2}(Z){h^2}(Z){\delta _{{Y_0} }}g(Z)}  - {Risk(Y_0)}^2}}{{{N_{s}}}}
	\end{equation}
respectively. 	According to \eqref{eq23} and \eqref{eq24}, the condition of the variance reduction can be obtained accordingly.
	
	The theoretical analysis indicates that the IS can reduce both the sample size and the estimation variance, provided an appropriately selected proposal probability distribution $g(Z)$ . Considering the unbiasedness of the estimation using the IS and the MCS, the lower variance indicates that the IS has a better estimation performance than the MCS \cite{r26}.

	\subsection{Sequential Importance Sampling based Simulation Strategy}
	
	Similar to  \eqref{eq6},  for $g(Z)$ we have
	\begin{equation} \label{eq25}
	\begin{array}{ll}
	g(Z)&=g(X_{n} ,\cdot \cdot \cdot X_{2} ,X_{1} )\\
	&{=g_n(X_{n} |X_{n-1} )\cdot g_{n-1}(X_{n-1} |X_{n-2} )\cdot \cdot \cdot g_1(X_{1} )}
	\end{array}
	\end{equation}
	It means the proposal joint probability distribution $g(Z)$ can be chosen sequentially at individual stages in a cascading outage.
	Thus the problem of choosing $g(Z)$ turns out to be one of choosing the series $g_{j+1}(X_{j+1} |X_{j} )$ sequentially. For the purpose of acquiring  more information about the cascading outage with severe load shedding, $g_{j+1}(X_{j+1} |X_{j} )$ should be carefully chosen to amplify the probability of cascading outages in  future stages versus original $f_{j+1}(X_{j+1} |X_{j} )$. Heuristically, we modify the outage probability of components given in \eqref{eq8} into
	\begin{equation} \label{eq26}
	q_{j,k}^i = \min (\eta p_{j,k}^i,\max (\varphi _k))
	\end{equation}
	where $q^i_{j,k}$ is the modified component's outage probability;  $\eta $ is the SIS parameter stands for the amplification factor of component's outage probability. Correspondingly  the modified transition probability becomes
	\begin{equation} \label{eq27}
\hat q_{j,j + 1}^i = \prod\limits_{k \in F_j^i} {q_{j,k}^i} \prod\limits_{k \in \bar F_j^i} {(1 - q_{j,k}^i)}
	\end{equation}
	
	For the $i$-th sample, the original load shedding probability $p_c^i $ is given by  \eqref{eq10} while the modified probability $q_c^i $ is given by
	\begin{equation} \label{eq28}
	q_c^i = \prod\limits_{j = 1}^{{n^i-1}} {\hat q_{j,j + 1}^i}
	\end{equation}
The corresponding sampling weight is
	\begin{equation} \label{eq29}
	w(z_s^i) = \frac{{p_c^i}}{{q_c^i}} = \prod\limits_{j = 1}^{{n^i-1}} {\frac{{\hat p_{j,j + 1}^i}}{{\hat q_{j,j + 1}^i}}}
	\end{equation}
	
	Simulating cascading outages with sampling weights given by \eqref{eq29}, the load shedding probability and the blackout risk can be estimated by using  \eqref{eq13} and \eqref{eq22}, respectively. According to the previous analyses, both the  number of simulations and the variance of estimations can be reduced, provided appropriately selected sampling weights  .

\begin{remark}
To guarantee  high sampling efficiency of the SIS,  $\eta$ should be choose carefully so that \eqref{eq18} or \eqref{eq20} is satisfied. Unfortunately, it is not really a trivial work because, in the light of the necessary and sufficient condition\eqref{eq18},  $w_0$  cannot be known a priori. However, noticing $p^i_{j,k}$ in \eqref{eq9} and and $q^i_{j,k}$ in \eqref{eq29} are usually very small, we have $(1-p^i_{j,k}) \approx (1-q^i_{j,k}) \approx 1$. It implies that the following condition holds
$$w(z_s^i) = \prod\limits_{j = 1}^{{n^i}} {\frac{{\hat p_{j,j + 1}^i}}{{\hat q_{j,j + 1}^i}}}  \approx \prod\limits_{j = 1}^{{n^i}} {\frac{{\prod\nolimits_{k \in F_j^i} {p_{j,k}^i} }}{{\prod\nolimits_{k \in F_j^i} {q_{j,k}^i} }}} $$
for most samples. Thus, if $\eta$ is selected such that $\eta>1$,  then the sufficient condition \eqref{eq20} can  hold approximately. Numerical experiments  empirically support this conclusion.

\end{remark}

	\subsection{Algorithm}	
	
	The algorithm of the SIS based strategy is given as follows
	
\noindent\rule[0.25\baselineskip]{0.5\textwidth}{1pt}	
	\begin{itemize}
		\small
		\item {\bf Step 1:  Data preparation.} Initialize the system data and parameters. Specifically, choose $\eta>1$.
		\item {\bf Step 2: Sampling states.} For the $i$-th sampling, according to the system state, $ x_j^i $ at stage $j$ , and the outage probability of components based on \eqref{eq8} and \eqref{eq26}, simulate the  component outages and acquire the new state $ x_{j+1}^i$ at the next stage. Afterward, calculate the state transition probability and the sampling weight using \eqref{eq10} and \eqref{eq28}, respectively.
		\item {\bf Step 3:  Termination judgment.} If $x_j^i $ is the same as $x_{j+1}^i$, the $i$-th sample of cascading outage simulation is completed at stage $j$ and the $i$-th sample $z_s^i = \{ x_1^i \cdot  \cdot  \cdot x_j^i\}  $ is obtained. If all $N_{s} $ simulations are completed, the sampling process is ended; otherwise let $i=i+1$ and go back to Step \eqref{eq2}.
		\item {\bf Step 4:  Data analysis.} According to  \eqref{eq13} and \eqref{eq22}, estimate the probability of load shedding and blackout risk.
	\end{itemize}
\noindent\rule[0.25\baselineskip]{0.5\textwidth}{1pt}

\begin{remark}
 
	In addition to the IS/SIS, the SPLITTING method has been used for effectively improving the rare events analysis in power systems \cite{r32,r33,r34}. Its main idea is to divide the path of cascading outages into multiple sub-paths to dramatically increase the  probability of rare events of interest. Similar to the IS/SIS, its simulation settings and parameters must be tuned carefully. 
	As the SPLITTING is still a MCS-based method essentially, the simulations for each sub-path may still need a huge number of samples as the state space is large. It is interesting that this problem can be surmounted by using the IS/SIS. This further motivates an improved approach that combines both the IS/SIS and the SPLITTING methods, which is our ongoing work.     

\end{remark}

	\section{Case Studies}
	In this section, numerical experiments are carried out on two systems based on the simplified OPA model without slow dynamic \cite{r14}. One test system is the IEEE 300-bus system with a total load of $24,000$ MW, while the other is a real provincial power grid  in China, with $1,122$ buses ,$1,792$ transmission lines or transformers and $52,000$ MW total load. 
	
	\subsection {Case 1: IEEE 300-bus System}

	\subsubsection { Efficiency of Probability Distribution Estimation}	
	In this case, the probability of load shedding in IEEE 300-bus system  is estimated by using the MCS and the SIS, respectively. The sample size of the MCS is 50,000 while that of the SIS is only 2,000 as the MCS requires much more samples to achieve a small variance of estimation.
	As mentioned previously, both the MCS and the SIS strategies give unbiased estimation on the load shedding probability. According to the estimation results shown in Fig. \ref{fig.2},  the two strategies output almost the same estimations on the probability distribution as the load shedding  less than 1,000MW. This result justifies that the SIS simulation strategy can achieve a given estimation accuracy with much less number of simulations, and thus it is of  higher efficiency than the MCS strategy. 	
		\begin{figure}[htb]
			\centering
			\includegraphics[height=4.0cm,width=6.5cm]{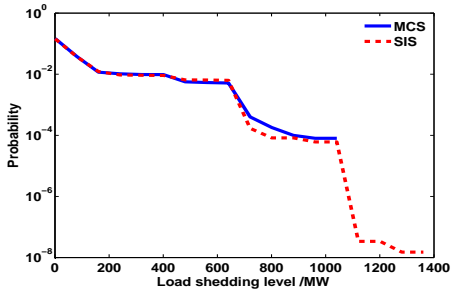}
			\caption{Probability estimation of the load shedding with MCS and SIS}
			\label{fig.2}
		\end{figure}
	
	In terms of the load shedding greater than 1,000MW (the corresponding probability is less than  $10^{-4}$ ), the MCS fails to find any event in 50,000 simulations and cannot come up to estimation for such very rare events. In the contrary, the SIS strategy successfully finds out many rare events with load shedding as large as 1,400MW in only 2,000  simulations (the corresponding probability is nearly $10^{-8}$  ). It indicates that the SIS strategy can considerably facilitate capturing very rare events of cascading outages even with much less  simulations. It also implies that the blackout risk analysis based on the the MCS might not be reliable enough since the captured rare events are usually far from being sufficient.
	
	\subsubsection {Variance of  Probability Distribution Estimation}	
	In this case, we compare the variance of probability estimation with the two strategies (see Fig. \ref{fig.3}) in IEEE 300-bus system. Since the true variance of probability estimation cannot be obtained directly, the sample variance is used as a surrogate. Take the MCS as an example. Denote $\tilde{\mu}^m (A)$ as the estimation of $m$-th sample sets, then the sample variance is
	$\tilde D(A)=\frac{1}{{{m_{\max }} - 1}}\sum\limits_{m = 1}^{{m_{\max }}} {{{[{\mu^m}(A) - (\frac{1}{{{m_{\max }}}}\sum\limits_{m = 1}^{{m_{\max }}} {{\mu^m}(A)} )]^2}}}$, where ${m_{\max }}$ is the number of i.i.d sample sets, which is set as 75 here.

	For comparison, the sample sizes of the MCS and the SIS are both set as 2,000. The SIS parameter is selected as $\eta=1.5$.
	As shown in Fig. \ref{fig.3},  the estimation variance for the SIS is lower than the MCS. The equivalent sampling weight bound $w_0P(A)$ is given in Fig \ref{fig.4}. It shows that the sufficient condition \eqref{eq20} is  satisfied almost everywhere, empirically verifying the theoretic analysis in Remark 4.
				\begin{figure}[htbp]
					\centering
					\includegraphics[height=4.0cm,width=6.5cm]{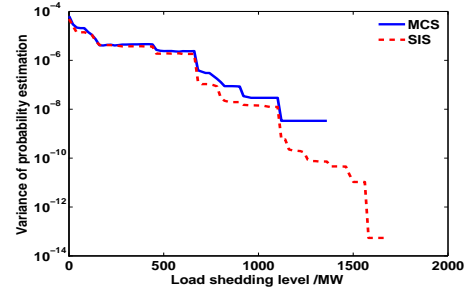}
					\caption{Variance of probability estimations with MCS and SIS}
					\label{fig.3}
				\end{figure}	
			\begin{figure}[htbp]
			\centering
			\includegraphics[height=4.0cm, width=6.5cm]{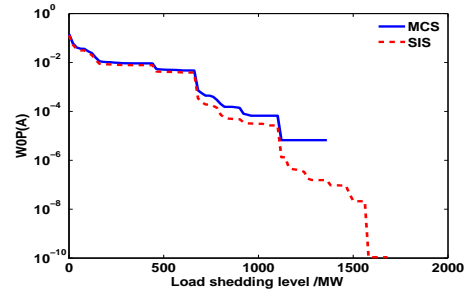}
			\caption{$w_0P(A)$ v.s. $P(A)$}
			\label{fig.4}
		\end{figure}

Fig. \ref{fig.5} presents the estimation variances decrease along with the increase of sample size. Here, the probability is estimated according to cascading outages with load shedding larger than (a)650MW, (b)750MW and  (c)850MW, respectively. As shown in Fig. \ref{fig.5}, the estimation variances of the SIS decrease much  faster compared with that of the MCS,  demonstrating that  SIS simulation strategy is capable of achieving more reliable estimation results  with much less simulations.
	\begin{figure}[htbp]
		\centering
		\includegraphics[height=7.6cm, width=0.5\textwidth]{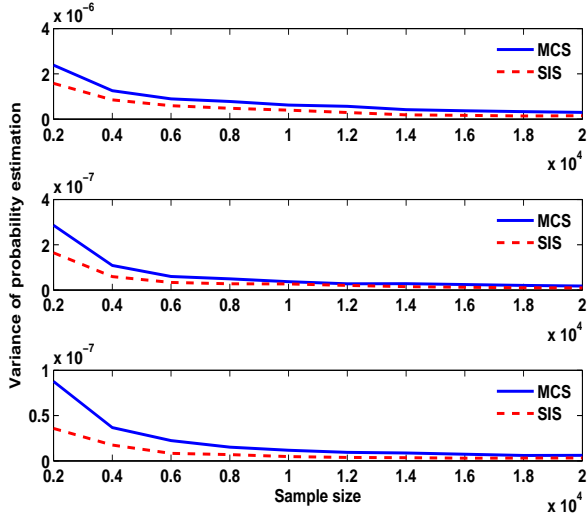}
		\caption{Convergence of the variance of the probability estimation}
		\label{fig.5}
	\end{figure}
\subsubsection {Impacts of the SIS Parameters $\eta$}
	In this case, we analyze the influence of the SIS parameter $\eta$ on the estimation variance of cascading outages in IEEE 300-bus system (see Fig. \ref{fig.6}).  Here,  $\eta$ is selected as 1.2, 1.5, 2, respectively, while other conditions are the same as the previous cases. It is found that $\eta$ can impact on the probability estimation of cascading outages in twofold: 	On the one hand, as a larger $\eta$ is adopted, more detailed information of the rare events can be captured. From Fig. \ref{fig.6}, it is observed that the SIS with $\eta=2$ obtains blackout samples with load  shedding even over 2,000MW (the corresponding probability is nearly $10^{-16}$), while the SIS with a smaller $\eta$, say 1.2 or 1.5, does not capture such rare events.
	\begin{figure}[htb]
	\centering
	\includegraphics[height=5.6cm, width=0.53\textwidth]{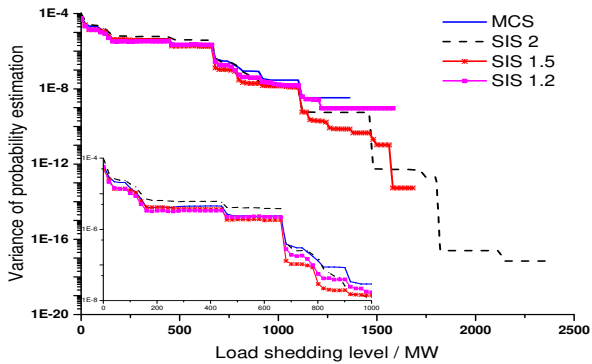}
	\caption{ Variance of probability estimation with different SIS parameters}
	\label{fig.6}
	\end{figure}
					
On the other hand, whereas more rare event samples are captured, the estimation variance of normal events with lower load shedding increases. In this case, the SIS with $\eta=2$ exhibits larger variance of the probability estimation of the load shedding less than 600MW versus  either the SIS with smaller parameters or the MCS. However, when $\eta$ is scaled down to 1.5 or 1.2, the variance of the probability estimation of normal events drops down to the same as the MCS, albeit much less rare events can be found. This case empirically indicates,  a larger SIS parameter can facilitate capturing more rare events with higher load shedding, at the expense of increasing the estimation variance of normal events. This expense, nevertheless, does make sense and is acceptable as we mainly are concerned with the potential blackouts with quite large load shedding. This feature of the SIS also allows to  purposely adjust resolution of cascading outage analysis according to desired levels of load shedding by carefully tuning the SIS parameter.

\subsubsection{Blackout Risk Estimation}	
	In this part, we deploy the SIS and the MCS simulation strategies to analyze the blackout risk defined as in \eqref{eq4}, where the load shedding level $Y_0$ is set as 750MW.  The mean value and the variance are obtained based on 75 sample sets. In each of sample set, 2,000  simulations are carried out with the SIS and the MCS, separately. The curves of the mean value and the variance along with the sample size are shown in Fig. \ref{fig.7}, showing that the SIS can significantly improve both the efficiency and the reliability of  blackout risk analysis.

\begin{figure}
	\centering
	\subfigure[Mean value]{
		\includegraphics[ width=0.48\columnwidth]{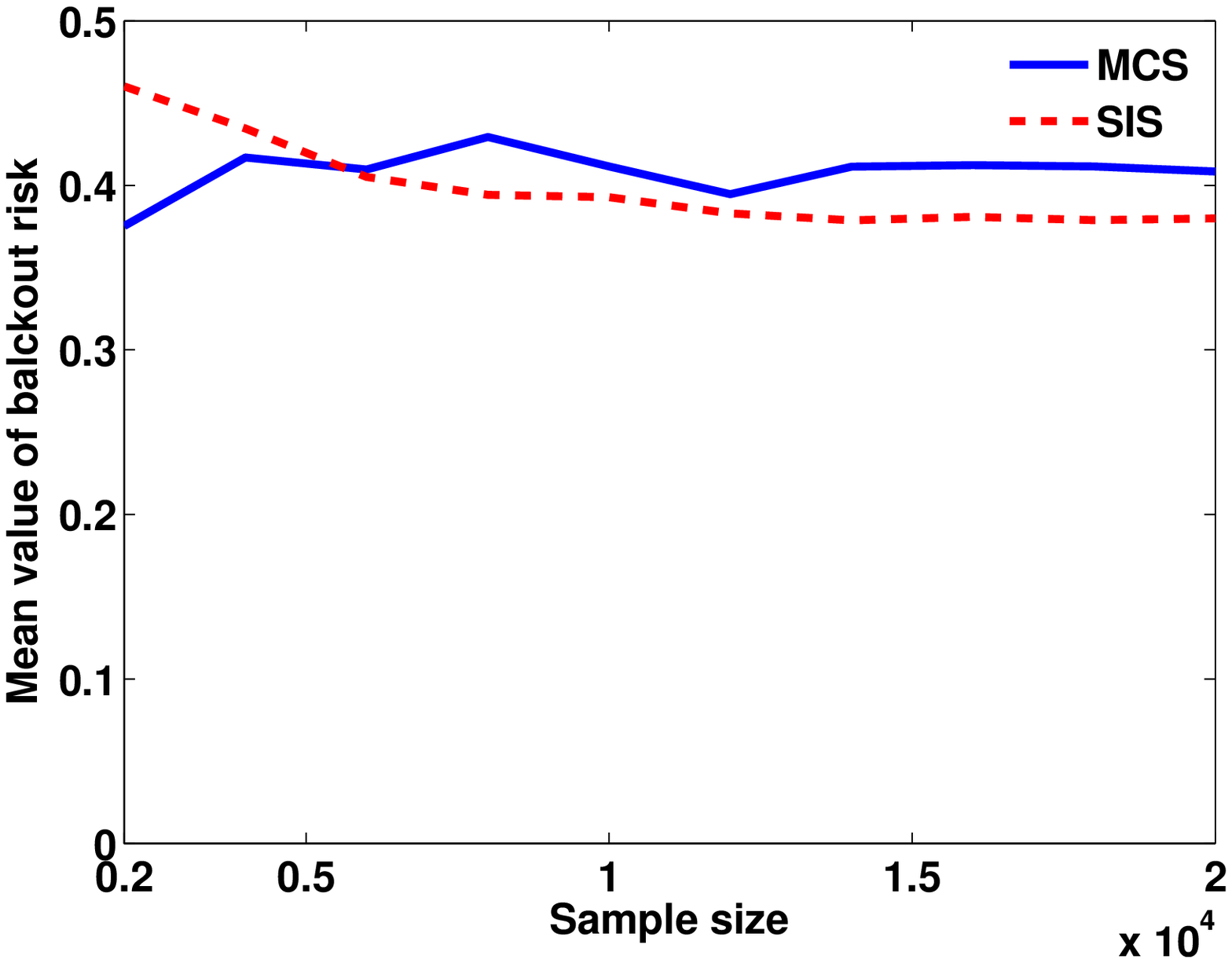}}
	\subfigure[Variance]{
		\includegraphics[ width=0.48\columnwidth]{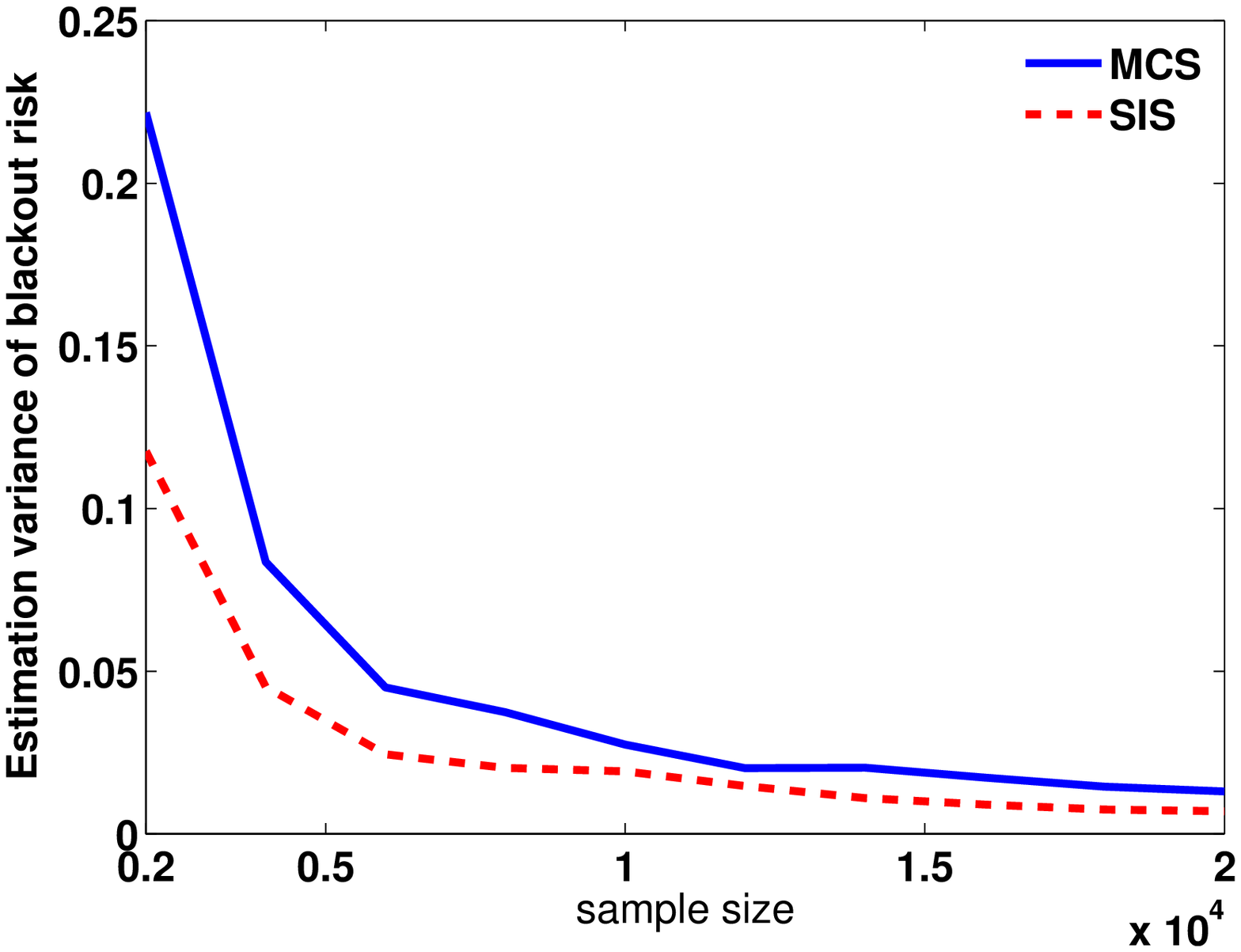}}
	\caption{Performance of  blackout risk estimation with different sample sizes}
	\label{fig.7} 
\end{figure}

\subsection{Case 2: A Real Power System}	



For further demonstrating the practicality of the SIS based strategy, we compare it with  the MCS strategy  in a large real power grid in China. The sample size is still set as 2,000. Similar to th previous case, both strategies can give unbiased estimation. Because of the space limitation, the results about the unbiasedness of estimation are omitted here, while the estimation variance of load shedding probability and blackout risk are shown in Tab. \ref{t1} and Tab. \ref{t2}, respectively.

In this case, the SIS outperforms the MCS again. As for small ${Y_0}$, the estimation variance of the SIS is smaller compared with the MCS. When ${Y_0}$ increases, the difference is getting more and more significant. When ${Y_0}$ is large enough, say, $4,000$ MW in this case, the MCS cannot obtain any effective samples to carry out statistic analysis on rare events, while the SIS is still effective for capturing those rare events . This case further exhibits the proposed SIS strategy can remarkably improve the efficiency and the reliability of  cascading outage analysis compared with the traditional MCS strategy,  especially when extremely rare blackouts are involved.

\begin{table}[!t]
	\caption{Estimation variance of load loss probability  ($\times 10^{-7}$)}
	\label{t1}
	\centering
\begin{tabular}{c|cccccc}
	\hline \hline
	${Y_0}$(MW)&1,000&2,000&3,000&4,000&5,000&6,000\\
	\hline
	MCS &$5.6$&$5.6{e^{ - 1}}$&$9.6{e^{ - 2}}$&-&-&-\\
	\hline
	SIS &$7.8 $&$4.1{e^{ - 1}}$&$4.8{e^{ - 3}}$&$4.3{e^{ - 3}}$&$2.8{e^{ -5}}$&$2.0{e^{ - 11}}$\\
	\hline \hline
\end{tabular}
\end{table}

\begin{table}[!t]
	\caption{Estimation variance of blackout risk with MCS and SIS}
	\label{t2}
	\centering
	\begin{tabular}{c|cccccc}
		\hline \hline
		${Y_0}$(MW)&1,000&2,000&3,000&4,000&5,000&6,000\\
		\hline
		MCS &$1.13$&$0.21$&$6.1{e^{ - 2}}$&-&-&-\\
		\hline
		SIS &$0.77$&$2.7{e^{ - 2}}$&$7.5{e^{ - 3}}$&$9.4{e^{ - 5}}$&$8.1{e^{ - 5}}$&$1.3{e^{ - 10}}$\\
		\hline \hline
	\end{tabular}
\end{table}

	\section{Conclusion}
	
	In this paper, we have formulated a cascading outage in power systems as a Markov chain with specific state space and transition probability, based on which we have further derived a sequential importance sampling  strategy for cascading outage simulations.  Theoretical analysis and case studies show that
	\begin{enumerate}
		\item 	The  Markov chain based formulation of cascading outages is well defined, which admits standard and strict stochastic analysis.  With the formulation, it is expected that more powerful analytic tools  can be applied.
		
		\item The SIS based simulation strategy can significantly enhance the computational efficiency and the estimation reliability of cascading outage analysis.
		
		\item The SIS based simulation strategy can dramatically improve the capability of capturing very rare events in cascading outage simulations.
		
	\end{enumerate}
	
	Whereas the Markov chain based formulation and the SIS based simulation strategy are derived for the cascading outage analysis in power systems in this paper, it could provide  a generic framework for cascading outage analysis of a broad class of complex networks. We believe  Our ongoing work is to quantitatively characterize the confidence bounds of the estimation results of SIS based simulation strategies.

\section*{Acknowledgment}
The authors would like to thank S. H. Low and L. Guo for very helpful discussions.



%

\bibliographystyle{IEEEtran}
\bibliography{ESORef0915}


%








\end{document}